\newcommand{\An}{\mathrm{An}}
\newcommand{\lb}{\mathrm{lb}}
\newtheorem{mcorollary}[theorem]{``Corollary''}
\newcommand{\Bot}{S_{\mathrm{Bot}}}
\begin{document}
\title{The Small Field Parabolic Flow for Bosonic Many--body Models:\\
        \Large Part 3 --- Nonperturbatively Small Errors}

\author{Tadeusz Balaban}
\affil{\small Department of Mathematics \authorcr
       Rutgers, The State University of New Jersey \authorcr
       tbalaban@math.rutgers.edu\authorcr
       \  }

\author{Joel Feldman\thanks{Research supported in part by the Natural 
                Sciences and Engineering Research Council 
                of Canada and the Forschungsinstitut f\"ur 
                Mathematik, ETH Z\"urich.}}
\affil{Department of Mathematics \authorcr
       University of British Columbia \authorcr
       feldman@math.ubc.ca \authorcr
       http:/\hskip-3pt/www.math.ubc.ca/\squig feldman/\authorcr
       \  }

\author{Horst Kn\"orrer}
\author{Eugene Trubowitz}
\affil{Mathematik \authorcr
       ETH-Z\"urich \authorcr
       knoerrer@math.ethz.ch, trub@math.ethz.ch \authorcr
       http:/\hskip-3pt/www.math.ethz.ch/\squig knoerrer/}


\maketitle

\begin{abstract}
\noindent
This paper is a contribution to a program to see symmetry breaking in a
weakly interacting many Boson system on a three dimensional lattice at 
low temperature.  It is part of an analysis of the  ``parabolic flow'' which 
exhibits the formation of a ``Mexican hat'' potential well. 
Here we provide arguments that suggest,
but do not completey prove, that the difference between the ``small field'' approximation, 
analyzed in \cite{PAR1,PAR2}, and full model is nonperturbatively small.

\end{abstract}


\newpage

As part of the program to see symmetry breaking in an interacting many 
Boson system on a three dimensional lattice in the thermodynamic limit,
we analyze in \cite{PAR1,PAR2} the ``small field'' approximation 
to the ``parabolic flow'' which exhibits the formation of a potential 
well.  In this paper,
we argue that the errors made with this approximation are nonperturbatively
small, that is they are  smaller than any power of the coupling constant.
This note does not provide a proof of this fact; however
we feel that the arguments given here
can provide the core of such a proof.

The \emph{first simplification} leading to the  ``small field'' 
approximation is a simplification of the starting point.
The outcome of the previous flow \cite{UV} (which treats the 
temporal ultraviolet problem in imaginary time) represents the 
partition function as a sum over ``large--field/ small--field'' 
decompositions of space. All but one term in this sum are 
nonperturbatively small, and the first simplification is to continue 
the flow with only this one term. It is of the form
\begin{equation}\label{eqnSFCstarting point}
\cZ_\In^{|\cX_0|}\int \Big[ \smprod_{x\in \cX_0} 
        \sfrac{ d\psi(x)^\ast\wedge d\psi(x)}{2\pi \imath}\Big] \,
  e^{\cA_0(\psi^*,\psi) }\chi_0(\psi)
\end{equation}
where $\cZ_\In$ is a normalization factor, $\cX_0$ is a unit lattice,
the action $\cA_0$ has a very specific form and $\chi_0$ is a function
with compact support that implements the small field cutoff. See 
\cite[(\eqnHTstartingpoint) and (\eqnHTaIn)]{PAR1}.

The output of the $n^{\rm th}$ renormalization group step is 
an approximation to the partition function that is  a constant
times a functional integral over a space of complex valued 
fields $\psi$ on a unit lattice $\cX_0^{(n)}$.
One can write the action in this functional integral as function $\fA_n(\psi^*,\psi)$
of the field $\psi$ and its complex conjugate $\psi^*$, where $\fA_n(\psi_*,\psi)$
is an analytic function\footnote{The exact form of this function is stated
in the main theorem  \cite[Theorem \thmTHmaintheorem]{PAR1}} of two independent complex fields $\psi_*,\,\psi$.
The domain of integration at this point is a bounded subset $\Int(n)$
of the space of complex valued fields on  $\cX_0^{(n)}$.
See \eqref{eqnALFIntncDef} below.
A block spin transformation amounts to rewriting the functional
integral as
\begin{equation}\label{eqnSFCintrofunctint}
 \sfrac{1}{N^{(n)}}\hskip-2pt\int\hskip-2pt\Big[\hskip-5pt
 \prod_{y\in\cX_{-1}^{(n+1)}}\hskip-3pt
  \sfrac{d\th(y)^*\wedge d\th(y)}{2\pi i}\Big]
\int_{\Int(n)} \Big[\prod_{x\in\cX_0^{(n)}} 
                   \sfrac{d\psi(x)^*\wedge d\psi(x)}{2\pi i}\Big]\ 
  e^{-a L^{-2}\|\th-Q\psi\|^2_{-1}
    - \fA_n(\psi^*,\psi)}
\end{equation}
where $N^{(n)}$ is the normalization constant for the Gaussian integral
over $\th$,  $\cX_{-1}^{(n+1)}$ is a sublattice of $\cX_0^{(n)}$   and
$Q$ is an averaging operator defined in 
\cite[Definition \defHTblockspintr]{PAR1}.
The goal is to perform, for any fixed $\theta$,  the $\psi$ integral in 
\eqref{eqnSFCintrofunctint} to obtain a functional integral
representation of the partition function in the $\theta$ variables.
We view this $\psi$ integral as an integral of a holomorphic differential
form in the $2 \big|\cX_0^{(n)}\big|$ complex variables
$\psi_*(x), \psi(x),\,x\in \cX_0^{(n)}\,$ over the set 
$\,  D = \set{ (\psi_*, \psi)\in \Int(n) \times \Int(n) }{ \psi_* =\psi^* }\,$:
\begin{equation}\label{eqnSFCintrointernalintegral}
\int_{D} \Big[\prod_{x\in\cX_0^{(n)}} 
                   \sfrac{d\psi(x)_*\wedge d\psi(x)}{2\pi i}\Big]\ 
  e^{-a L^{-2}\<\th^*-Q\psi_* \,,\,\th-Q\psi\>_{-1}
    -\fA_n(\psi_*,\psi)}
\end{equation}
See \cite{ParOv}, Step 3. Observe that $D$ has $2 \big|\cX_0^{(n)}\big|$ real dimensions.

We will use stationary phase to evaluate the integral 
\eqref{eqnSFCintrointernalintegral}.  To do so, we want to determine,
for each fixed value of $\theta$, an approximate critical point 
$\,\big( \psi_{*n}(\th^*\!,\th) ,\,\psi_{n}(\th^*\!,\th) \big)\,$
for the map
\begin{equation}\label{eqnSFCintrocriticalpoint}
(\psi_*,\psi) \mapsto -a L^{-2}\<\th^*-Q\psi_* \,,\,\th-Q\psi\>_{-1}
    -\fA_n(\psi_*,\psi)
\end{equation}
This approximate critical point lies $\Int(n)\times \Int(n)$  only 
if $\theta$ is not too big. Below, we argue that for large $\theta$, 
the $\psi$ integral in \eqref{eqnSFCintrointernalintegral}
gives nonperturbatively small contributions. 
Therefore we make an {\it approximation} by restricting the variable 
$\theta$ in \eqref{eqnSFCintrofunctint} to a bounded
subset $\check\Int(n)$.

As pointed out in \cite[Step 3]{ParOv}, for general 
$\theta\in \check \Int(n)$, the critical point of \eqref{eqnSFCintrocriticalpoint}
does not fulfil the reality condition $\psi_{*n}(\th^*,\th) =\psi_{n}(\th^*,\th)^* $.
In particular it does not lie in the domain of integration $D$.
We choose a bounded subset\footnote{We wish to integrate over
a neighborhood of the critical point. So we make a change of
variables to ``fluctuation fields $\de\psi_*=\psi^*-\psi_{*n}(\th^*,\th)$,
$\de\psi=\psi-\psi_n(\th^*,\th)$''. The condition in the set below
is a reality condition on the fluctuation fields.} 
$S$ of
$$
\set{(\psi_*,\psi)}{ \psi_* - \psi_{*n}(\th^*,\th)  
               = \big(\psi - \psi_{n}(\th^*,\th) \big)^*}
$$
containing $\,\big( \psi_{*n}(\th^*,\th) ,\,\psi_{n}(\th^*,\th) \big)\,$,
and a $2 \big|\cX_0^{(n)}\big|+1$  dimensional set $\cY$ whose boundary consists
of $D$, $S$ and some other component. Below we argue that the integral
of 
\begin{equation*}
 \Big[\prod_{x\in\cX_0^{(n)}} 
                   \sfrac{d\psi(x)_*\wedge d\psi(x)}{2\pi i}\Big]\ 
  e^{-a L^{-2}\<\th^*-Q\psi_* \,,\,\th-Q\psi\>_{-1}
    -\fA_n(\psi_*,\psi)}
\end{equation*}
over $\partial \cY \setminus (S\cup D)$ is nonperturbatively small.
This, combined with Stokes' theorem, would justify our 
\emph{last approximation}, which is the replacement
of \eqref{eqnSFCintrointernalintegral} with
\begin{equation*}
\int_{S} \Big[\prod_{x\in\cX_0^{(n)}} 
                   \sfrac{d\psi(x)_*\wedge d\psi(x)}{2\pi i}\Big]\ 
  e^{-a L^{-2}\<\th^*-Q\psi_* \,,\,\th-Q\psi\>_{-1}
    -\fA_n(\psi_*,\psi)}
\end{equation*}

An important ingredient in the argument that the above approximations 
are justified is that, at the points considered, the 
effective action
\begin{equation*}
a L^{-2}\<\th^*-Q\psi_* \,,\,\th-Q\psi\>_{-1}
    + \fA_n(\psi_*,\psi)
\end{equation*}
has a large, positive, real part. Though positivity is suggested by 
the quadratic and quartic terms in the explicit form of the action 
(see \cite[Definition \defHTblockspintr]{PAR1}), we have
to pay close attention since the fields $\psi_*,\psi$ are complex valued.

This note can be considered as a complement to \cite{PAR1,PAR2} and 
uses the notation introduced there. This notation is summarized in 
\cite[Appendix \appDefinitions]{PAR1}.

We emphasise again that this note is intended to provide motivation 
rather than a proof. Some of the bounds are not uniform in 
the volume $\cX_0$.  Furthermore some of the statements we make are 
handwavy. We concentrate on showing where the nonperturbatively 
small factors come from. A rigorous construction, with bounds 
uniform in the volume, would entail expressing the errors as sums 
over ``large field subsets'' $\fL\subset\cX_0^{(n)}$ and exhibiting 
bounds which include a nonperturbatively small factor for each point 
of each $\fL$, as was done in \cite{UV}.

\bigskip
As said above, we start with the approximation \eqref{eqnSFCstarting point}
for the partition function $\Tr\, e^{-\sfrac{1}{kT}\,(H-\mu N)}$ of the
many Boson system
(see \cite[(\eqnTHuvoutput), (\eqnTHsmallfieldoutput) and (\eqnHTstartingpoint)]{PAR1}).
The approximations $\bbbt_{n}^{(SF)}$ to the block spin transformations
sketched above lead, for each $0\le n<n_p$, to the approximation of
\begin{equation}\label{eqnALFjn}
\begin{split}
J_n &= \int \Big[\prod_{x\in\cX_0^{(n)}} 
                   \sfrac{d\psi(x)^*\wedge d\psi(x)}{2\pi i}\Big]
    \Big( (\bbbs \bbbt_{n-1}^{(SF)}) \circ(\bbbs \bbbt_{n-2}^{(SF)})
    \circ\cdots \circ (\bbbs \bbbt_0^{(SF)}) \Big)\Big(e^{\cA_0} \Big)
    (\psi^*,\psi)\chi_n(\psi) \\
   &= \sfrac{\tilde \cZ_n}{\tilde \cZ_{n+1}}
     \int \Big[\prod_{x\in\cX_0^{(n+1)}} 
                   \sfrac{d\psi(x)^*\wedge d\psi(x)}{2\pi i}\Big]\ 
     (\bbbs \bbbt_n)\ \Big(\big[(\bbbs \bbbt_{n-1}^{(SF)}) 
    \circ\cdots \circ (\bbbs \bbbt_0^{(SF)})\, e^{\cA_0}  \big]
    \chi_n\Big) 
\end{split}
\end{equation}
(see \cite[Remarks \remHTpropblockspin.i and 
\remHTbasicremarkonscaling.i]{PAR1} and \eqref{eqnSFCIn}, below) 
by a constant times
\begin{align*}
J_{n+1} = \int \Big[\prod_{x\in\cX_0^{(n+1)}} 
                   \sfrac{d\psi(x)^*\wedge d\psi(x)}{2\pi i}\Big]\ 
    \Big( (\bbbs \bbbt_n^{(SF)}) 
    \circ\cdots \circ (\bbbs \bbbt_0^{(SF)}) \Big)\Big(e^{\cA_0} \Big)
    (\psi^*,\psi) \chi_{n+1}(\psi)
\end{align*}
In  \cite{PAR1,PAR2} we did not say very much either about 
the ``small field'' cutoff functions $\chi_n(\psi)$ or about 
the errors introduced by these approximations. In this note 
we make a possible choice of $\chi_n(\psi)$,
$n\ge 1$ (one of many possible choices) and argue that it is reasonable to expect that, for all $n\ge 0$, 
the error
$
\sfrac{1}{\tilde\cZ_{n+1}}J_{n+1}-\sfrac{1}{\tilde\cZ_n}J_n
$ 
is nonperturbatively small. By this we mean smaller than the
dominant contribution by a factor of order $O(e^{-1/\fv_n^\veps})$ 
for some $\veps>0$. We concentrate on the case $n\ge 1$. The case $n=0$
is similar but simpler.

We use two mechanisms for ``generating nonperturbatively small factors''. 
The first consists in exhibiting large negative contributions to the 
leading part $-A_n$ of the representation
\begin{equation}\label{eqnALFjnint}
\begin{split}
&
\Big( (\bbbs \bbbt_{n-1}^{(SF)}) \circ(\bbbs \bbbt_{n-2}^{(SF)})\circ
   \cdots \circ (\bbbs \bbbt_0^{(SF)}) \Big)
\Big(e^{\cA_0} \Big)
\cr
&\hskip 2cm
= \sfrac{1}{\cZ_n}\exp\Big\{- A_n(\psi^*,\psi, \phi_*, \phi,\,\mu_n,\cV_n)
+\cR_n +\cE_n 
\Big\}\bigg|_{\phi_{(*)} = \phi_{(*)n}(\psi^*,\psi,\mu_n,\cV_n)}
\end{split}
\end{equation}
of \cite[Theorem \thmTHmaintheorem]{PAR1}. These large negative contributions arise
whenever $|\psi(x)|$ or $|\partial_\nu\psi(x)|$ are sufficiently large
for some $x\in\cX_0^{(n)}$, $0\le \nu\le 3$. See Proposition
\ref{propALFfirst},  below. The second mechanism appears in the course of the
stationary phase approximation of \cite[\S\sectINTstatPhase]{PAR1} when 
$(\psi_*,\psi)$  is too far from the critical point $\big(\psi_{*n}(\th^*,\th), 
\psi_n(\th^*,\th)\big)$. See Step 3, below.

The background fields $\phi_{(*)n}$ and the actions $-A_n+\cR_n+\cE_n$ 
are well--defined on the ``domain of analyticity''
\begin{align*}
\An(n) & = \set{\psi\in\cH_0^{(n)}}{
                 |\psi(x)|<\ka(n),\ |\partial_\nu\psi(x)|<\ka'(n)
                 \text{ for all }x\in\cX_0^{(n)},\ 0\le\nu\le 3} 
\end{align*}
On these domains  we have the following lower and upper bounds on the
real part of the dominant contribution, $A_n$, to the action.

\begin{proposition}\label{propALFfirst}
Let $\de>0$.
There are constants $\ga,\tilde\ga>0$, independent of $\de$, such that 
if $\fv_0$ is sufficiently small, depending on $\de$, 
\begin{align*}
&\ga\sum_{\nu=0}^3\int_{\cX_0^{(n)}}\hskip-10pt dx\ |\partial_\nu\psi(x)|^2
-(1+\de)\mu_n \int_{\cX_0^{(n)}}\hskip-10pt  dx\ |\psi(x)|^2
+\half(1-\de)\rv_n \int_{\cX_0^{(n)}}\hskip-10pt  dx\ |\psi(x)|^4\cr
&\hskip0.2in
\le\Re A_n(\psi^*,\psi, \phi_*, \phi,\,\mu_n,\cV_n) 
  \Big|_{\phi_{(*)} = \phi_{(*)n}(\psi^*,\psi,\mu_n,\cV_n)} \cr
&\hskip0.4in\le
\tilde\ga\sum_{\nu=0}^3\int_{\cX_0^{(n)}}\hskip-10pt dx\ |\partial_\nu\psi(x)|^2
-(1-\de)\mu_n \int_{\cX_0^{(n)}}\hskip-10pt  dx\ |\psi(x)|^2
+\half(1+\de)\rv_n \int_{\cX_0^{(n)}}\hskip-10pt  dx\ |\psi(x)|^4
\end{align*}
for all $1\le n\le n_p$ and $\psi\in\An(n)$.
Here
\begin{equation*}
\rv_n=\int_{\cX_n^3} du_2\, du_3\, du_4\ V_n(0,u_2,u_3,u_4)
\end{equation*}
is the ``coupling constant at scale $n$''.

\end{proposition}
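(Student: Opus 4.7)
The plan is to identify the three explicit terms in the bounds with the leading pieces of $A_n$ as made explicit in \cite[Theorem \thmTHmaintheorem]{PAR1}, and to absorb every other contribution into the $(1\pm\de)$ factors using the smallness of $\fv_n$ together with the small-field constraints defining $\An(n)$. The key preliminary input, supplied by the background-field analysis of \cite{PAR1}, is the expansion
\[
\phi_n(\psi^*,\psi,\mu_n,\cV_n)=\psi+r_n(\psi^*,\psi),\qquad
|r_n(x)|+\sum_\nu|\partial_\nu r_n(x)|\le C\,\fv_n^{\veps'}\Bigl(|\psi(x)|+\sum_\nu|\partial_\nu\psi(x)|\Bigr),
\]
valid for $\psi\in\An(n)$ and some $\veps'>0$.

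Substitute $\phi=\psi+r_n$ into the three leading pieces of $A_n$. The hopping/kinetic piece produces $\sum_\nu\int|\partial_\nu\psi|^2\,dx$ times an $O(1)$, $\de$-independent coefficient lying between $\ga$ and $\tilde\ga$, by the equivalence of the accumulated block-spin covariance norm with the sum of the forward-difference norms. The mass piece $-\mu_n|\phi|^2$ produces $-\mu_n\int|\psi|^2\,dx$. The quartic piece $\half\cV_n(\phi^*\phi,\phi^*\phi)$ produces $\half\rv_n\int|\psi|^4\,dx$ after replacing the kernel $V_n$ by its total integral $\rv_n$. Every other contribution to $\Re A_n\big|_{\phi_{(*)}=\phi_{(*)n}}$ — linear and higher-order terms in $r_n$, off-diagonal parts of $V_n$, and the non-leading pieces of $A_n$ listed in \cite[Theorem \thmTHmaintheorem]{PAR1} — then has to be bounded by $\de\,\mu_n\int|\psi|^2\,dx+\de\,\rv_n\int|\psi|^4\,dx$ plus a small fraction of the gradient integral. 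This absorption uses the $\An(n)$-constraints $|\psi|<\ka(n)$, $|\partial_\nu\psi|<\ka'(n)$, the scaling relations among $\ka(n)$, $\ka'(n)$, $\mu_n$, $\rv_n$, $\fv_n$ carried along the flow, and the fact that the coefficients of the non-leading pieces are powers of $\fv_0$, which can be made arbitrarily small in $\de$ by choosing $\fv_0$ small depending on $\de$.

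The main obstacle is this last absorption step for the nonlocal quartic and higher-order contributions. These are polynomials in $\psi^*,\psi$ with complex-valued, nonlocal kernels whose real parts are not a priori sign-definite, so one cannot argue by positivity alone. To obtain a uniform $\de$-bound one must combine the pointwise smallness of the fields on $\An(n)$ with decay estimates for the kernels (and, where needed, with analyticity of the action in $\psi_*,\psi$) in order to dominate each nonlocal piece by a local, diagonal quartic or mass integral, picking up an extra relative factor of $\fv_n^{\veps'}$. This is where the hypothesis that $\fv_0$ be sufficiently small depending on $\de$ enters quantitatively, and it is also the one place where the small-field cutoff in the definition of $\An(n)$ does real work: without it the nonlocal subleading kernels could defeat the positivity of the local quartic.
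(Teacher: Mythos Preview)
Your proposal has the right overall shape---expand the background field, separate the three dominant local terms, and absorb the rest---but the ``preliminary input'' you invoke is incorrect, and this error propagates through the whole argument.

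The background field $\phi_n$ lives on the fine lattice $\cX_n$, not on $\cX_0^{(n)}$, so the expression $\phi_n=\psi+r_n$ needs interpretation; more seriously, the leading part of $\phi_n$ is \emph{not} $\psi$ up to an $O(\fv_n^{\veps'})$ error. The correct leading term is $\Phi=S_n(\mu_n)Q_n^*\fQ_n\psi$, and (see the paper's proof) one has
\[
\Phi(u)=\sfrac{a_n}{a_n-\mu_n}\,\psi\big(X(u)\big)+F_{\lb}(\mu_n)\big(\{\partial_\nu\psi\}\big)(u),
\]
where the second summand is an $O(1)$ (not $O(\fv_n^{\veps'})$) linear functional of the derivatives of $\psi$. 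Only the genuinely nonlinear part $\phi_n^{(\ge 3)}$ carries a power of $\fv_n$. Because your $r_n$ estimate is false, the substitution step you describe does not isolate the three target terms with errors you can absorb.

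This matters most for the kinetic term. In the paper the gradient piece does not come from ``substituting $\phi\approx\psi$ into a hopping term''; it emerges from the exact identity, obtained by substituting the background field equations into $A_n$,
\[
A_n=\big<\psi^*,\fQ_n(\psi-Q_n\phi)\big>_0-\cV_n(\phi_*,\phi),
\]
followed by $\psi-Q_n\Phi=B_n^\De\psi$ and $\big<\psi^*,\fQ_n B_n^\De\psi\big>_0=\big<\psi^*,\De^{(n)}\psi\big>_0-\mu_n\langle\Phi_*(\mu_n),\Phi(0)\rangle_n$. The constants $\ga,\tilde\ga$ then come from Fourier bounds on $\Re\widehat{\De^{(n)}}_k$. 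Your appeal to an ``equivalence of the accumulated block-spin covariance norm with the sum of forward-difference norms'' is gesturing at this, but without the reduction to $\De^{(n)}$ you have no mechanism that actually produces a real, sign-definite quadratic form in $\partial_\nu\psi$. The mass and quartic terms likewise require the $\Phi\leftrightarrow\Psi$ comparison above, which generates gradient cross terms of size $O(\sqrt{\mu_n})$ and $O(\sqrt{\fv_n}\,\ka(n))$ rather than $O(\fv_n^{\veps'})$; these must be absorbed into the kinetic term, not into the $\de$-wiggle on the mass and quartic coefficients alone.
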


\begin{proof}
We start by recalling, from \cite[(\eqnBGAbgeqnsB) and 
Theorem \HTthminvertibleoperators]{PAR1},
that the background fields $\phi_{(*)n}(\psi^*,\psi,\mu_n,\cV_n)$
are  the solutions of
\begin{equation}\label{eqnLFbge}
\begin{split}
D_n^*\phi_* &=Q_n^* \fQ_n\psi^*
               -Q_n^*\fQ_n Q_n\phi_*+\mu_n\phi_*
               -\cV_{n*}'(\phi_*,\phi,\phi_*) \\
D_n\phi &=Q_n^* \fQ_n\psi
               -Q_n^*\fQ_n Q_n\phi+\mu_n\phi
               -\cV_n'(\phi,\phi_*,\phi)
\end{split}
\end{equation}
For the rest of this proof, we'll write 
$\phi_{(*)}$ instead of $\phi_{(*)n}(\psi^*,\psi,\mu_n,\cV_n)$ and 
$A_n$ instead of $A_n(\psi^*,\psi,\phi_*,\phi,\mu_n,\cV_n)$ . 
Substituting \eqref{eqnLFbge} into the definition \cite[(\eqnHTAndef)]{PAR1} of $A_n$ gives
\begin{equation}\label{eqnLFmainAction}
\begin{split}
A_n
%
&=\<\psi^*,\fQ_n\big(\psi-Q_n\phi\big)\>_0 
     -\< \phi_*,\,\cV_n'(\phi,\phi_*,\phi)\>_n
    + \cV_n(\phi_*,\phi) \\
&= \<\psi^*,\fQ_n\big(\psi-Q_n\phi\big)\>_0 
     - \cV_n(\phi_*,\phi)
\end{split}
\end{equation}
since
$\ 
\< \phi_*,\,\cV_n'(\phi,\phi_*,\phi)\>_n
 = 2 \cV_n(\phi_*,\phi)
\ $.

By \cite[Proposition \propBGEphivepssoln.a and Remark \remBGEphivepssoln]{BGE},
\begin{equation}\label{eqnLFphieqpsiplus}
\phi= \Phi +  \phi_n^{(\ge 3)}(\psi^*,\psi)
=\Phi  - S_n(\mu_n)\cV_n'\big(\Phi,\Phi_*,\Phi\big) 
          + \phi_n^{(\ge 5)}(\psi^*,\psi)
\end{equation}
with
\begin{equation*}
\Phi_*=\Phi_*(\mu_n)=S_n(\mu_n)^* Q_n^* \fQ_n\psi^*\qquad
\Phi=\Phi(\mu_n)=S_n(\mu_n) Q_n^* \fQ_n\psi
\end{equation*}
being the parts of $\phi_{(*)}$ that are of degree precisely one in 
$\psi^{(*)}$ and $\phi_n^{(\ge d)}(\psi^*,\psi)$ being the part that
is of degree at least $d$ in $\psi^{(*)}$. So
\begin{equation}\label{eqnLFpsiminusQphi}
\psi-Q_n\phi
=B^\De_n\psi
      + Q_n S_n(\mu_n)\cV_n'\big(\Phi,\Phi_*,\Phi\big) 
          -Q_n \phi_n^{(\ge 5)}(\psi^*,\psi)
\end{equation}
where $B^\De_n=\bbbone-Q_n S_n(\mu_n)Q_n^* \fQ_n$.
For general ($O(1)$ small enough) $\mu$
\begin{align*}
S_n(\mu) &= (D_n+Q_n^*\fQ_n Q_n-\mu)^{-1}
          = S_n(\bbbone-\mu S_n)^{-1}
         = S_n +\mu S_n S_n(\mu)
\end{align*}
so that, by \cite[Proposition \propHTexistencecriticalfields]{PAR1}, 
\begin{equation}\label{eqnLFBnDebis}
\begin{split}
\big<\psi^*,\fQ_n B^\De_n \psi\big>_0
&= \big<\psi^*,\,\fQ_n ( \bbbone-Q_n S_n Q_n^* \fQ_n) \psi\big>_0
- \mu_n \big<\psi^*,\,\fQ_n Q_n S_n(\mu_n)S_n Q_n^* \fQ_n ) \psi\big>_0
\\
&=  \big<\psi^*,\,\De^{(n)}\psi\big>_0  -\mu_n \<\Phi_*(\mu_n)\,,\,\Phi(0)\>_n
\end{split}
\end{equation}
with $\De^{(n)}=\De^{(n)}(\mu=0)$.
Inserting \eqref{eqnLFpsiminusQphi} and \eqref{eqnLFBnDebis} into the representation
\eqref{eqnLFmainAction} of $A_n$ gives
\begin{align*}
A_n&=  \big<\psi^*,\,\De^{(n)}\psi\big>_0
                  -\mu_n \<\Phi_*(\mu_n),\,\Phi(0)\>_n
 +  \<S_n(\mu_n)^*Q_n^*\fQ_n\psi^* , \cV_n'\big(\Phi,\Phi_*,\Phi\big)\>_n
         \\&\hskip6cm
 - \cV_n(\phi_*,\phi)
 - \big<Q_n^*\fQ_n\psi^*,  \phi_n^{(\ge 5)}(\psi^*,\psi)\big>_n\\
&=  \big<\psi^*,\,\De^{(n)}\psi\big>_0 
  -\mu_n \<\Phi_*(\mu_n)\,,\,\Phi(0)\>_n
  +  2 \cV_n\big(\Phi_*,\Phi\big)
  -  \cV_n\big(\Phi_*+\phi_{*n}^{(\ge 3)},\Phi+\phi_n^{(\ge 3)}\big)
\\&\hskip7.8cm
                  - \big<Q_n^*\fQ_n\psi^*,
                                   \phi_n^{(\ge 5)}(\psi^*,\psi)\big>_n
\\
&=  \big<\psi^*,\,\De^{(n)}\psi\big>_0 
  -\mu_n \<\Phi_*(\mu_n)\,,\,\Phi(0)\>_n
  +   \cV_n\big(\Phi_*,\Phi\big)
\\
&\hskip1cm
 -\Big[   \cV_n\big(\Phi_*+\phi_{*n}^{(\ge 3)},\Phi+\phi_n^{(\ge 3)}\big)
 - \cV_n\big(\Phi_*,\Phi\big)    \Big]
  - \big<Q_n^*\fQ_n\psi^*, \,\phi_n^{(\ge 5)}(\psi^*,\psi)\big>_n
\end{align*}
\smallskip

In our bounds, we fix $\psi\in\cH_0^{(n)}$ and denote
\begin{equation*}
\wf=\|\psi\|_{L^\infty}\quad
\wf_2=\|\psi\|_{L^2} \quad
\wf_4=\|\psi\|_{L^4} \quad
\wf'=\max_{0\le\nu\le 3}\|\partial_\nu\psi\|_{L^\infty}\quad
\wf'_2=\sum_{\nu=0}^3\|\partial_\nu\psi\|_{L^2}
\end{equation*}
Since $\psi\in\An(n)$, we have $\wf <\ka(n)$ and $\wf' < \ka'(n)$. Also,
 $\wf' \le 2\wf$ since $\partial_\nu$ is a difference operator on a unit
lattice.
By \cite[Remark \remBGEphivepssoln]{BGE} and \cite[Lemma \lemSUBLp.b]{SUB}, 
we have
$\, \big\| \phi_n^{(\ge 5)} \big\|_{L^{4/3}} = O\big(\fv_n^2 \wf^2\wf_4^3\big)
\,$
and consequently, by \cite[Lemma \lemSUBgenLoneLinfty]{SUB},
\begin{equation*}
\big<Q_n^*\fQ_n\psi^*, \,\phi_n^{(\ge 5)}(\psi^*,\psi)\big>_n
= O\big(\fv_n^2 \wf^2\wf_4^4\big)
\end{equation*}
Also, by \cite[Proposition \propBGEphivepssoln.a]{BGE}, \cite[Lemma \lemSUBdiffnorm,
 Remark \remSUBcB.a and Lemma \lemSUBLp.a]{SUB},
\begin{equation*}
\big|   \cV_n\big(\Phi_*+\phi_{*n}^{(\ge 3)},\Phi+\phi_n^{(\ge 3)}\big)
 - \cV_n\big(\Phi_*,\Phi\big)    \big| = O\big(\fv_n^2 \wf^2\wf_4^4\big)
\end{equation*}
Thus
\begin{equation}\label{eqnLFBsecondrepAn}
A_n=  \big<\psi^*,\,\De^{(n)}\psi\big>_0 
  -\mu_n \<\Phi_*(\mu_n)\,,\,\Phi(0)\>_n
  +   \cV_n\big(\Phi_*,\Phi\big) 
+O\big(\fv_n^2 \wf^2\wf_4^4\big)
\end{equation}

By \cite[Lemma \lemBGEexpandalphiveps]{BGE}
\begin{align*}
\Phi_{(*)}(\mu)(u)
=\big(S_n(\mu)^{(*)}Q_n^* \fQ_n\psi^{(*)}\big)(u)
&=\sfrac{a_n}{a_n-\mu}\Psi^{(*)}(u)
          +F_{\lb(*)}(\mu)(\{\partial_\nu\psi_{(*)}\})(u)
\end{align*}
with $\Psi^{(*)}(u)=\psi^{(*)}\big(X(u)\big)$ and with the maps $F_{\lb(*)}(\mu)$ being of degree precisely one.
Hence, recalling that $\wf'\le 2\wf$, 
\begin{equation}\label{eqnLFBPhipsi}
\begin{split}
\big|\mu_n \<\Phi_*(\mu_n),\,\Phi(0)\>_n 
            - \sfrac{a_n\mu_n}{a_n-\mu_n} \<\psi^*,\,\psi\>_0\big|
&=  |\mu_n|\, O\big( \wf_2\,\wf'_2+{\wf'}^2_{\!\!2}\big) 
= \sqrt{|\mu_n|}\,O\big( |\mu_n| \wf_2^2+{\wf'}^2_{\!\!2}\big) 
\\
\big|  \cV_n\big(\Phi_*,\Phi\big) 
- \big(\sfrac{a_n}{a_n-\mu_n}\big)^4  \cV_n\big(\Psi^*,\Psi\big)\big|
& =\fv_n\, O\big( {\wf'_2}\wf\wf_4^2
                  + {\wf'}^2_{\!\!2}\wf^2
                  + \wf'\wf\,{\wf'}^2_{\!\!2}+ {\wf'}^2{\wf'}^2_{\!\!2}\big)\\
   &= \sqrt{\fv_n}\,\wf \, 
          O\big( (1+\sqrt{\fv_n}\, \wf) {\wf'}^2_{\!\!2} +\fv_n\wf_4^4\big)
\end{split}
\end{equation}
Using \cite[Theorem \thmTHmaintheorem]{PAR1} and localizing as in 
\cite[Corollary \corLprelocalize]{PAR2},
\begin{equation}\label{eqnLFc}
\cV_n\big(\Psi^*,\Psi\big)
= \half\rv_n\wf_4^4 
+O\big(\sqrt{\fv_n\ka(n)^2}\,\big)\,{\wf'}^2_{\!\!2}
+O\big(\fv_0^{\sfrac{2}{3}-7\eps}+\sqrt{\fv_n\ka(n)^2}\,\big)\fv_n\wf_4^4
\end{equation}
Inserting \eqref{eqnLFBPhipsi} and \eqref{eqnLFc} into \eqref{eqnLFBsecondrepAn} we get
\begin{equation}\label{eqnLFa}
\begin{split}
 A_n&= \big<\psi^*\,,\,\De^{(n)}\psi\big>_0 
    -\sfrac{a_n\mu_n}{a_n-\mu_n} \<\psi^*\,,\,\psi\>_0
                   +\half \big(\sfrac{a_n}{a_n-\mu_n}\big)^4 \rv_n\wf_4^4 
                       \\
&\hskip1cm
     +O\big(\sqrt{\mu_n}+\sqrt{\fv_n \ka(n)^2}\,\big)\,{\wf'}^2_{\!\!2}
     + O\big(\sqrt{\mu_n}\big)\,\mu_n\wf_2^2
     +O\big(\sqrt{\fv_0}+\sqrt{\fv_n \ka(n)^2}\,\big)\,\fv_n\wf_4^4
\end{split}
\end{equation}

By \cite[Lemma \lemPOCDenppties.b,d]{POA}, the Fourier transform of 
$\De^{(n)}$ is
\begin{align*}
\widehat{\De^{(n)}}_k
&=-ik_0+\big(\sfrac{1}{a_n}+\sfrac{\veps_n^2}{2}\big)k_0^2
    +\half\! \smsum_{\nu,\nu'=1}^3 \!\!H_{\nu,\nu'}\bk_\nu\bk_{\nu'}
        +O\big(|k|^3\big)
\end{align*}
and obeys $\Re\widehat{\De^{(n)}(0)}_k\ge \rho(c)$ when $|k|\ge c$. In particular,
there are constants $\ga$, $\tilde\ga$, (independent of $n$ and $L$) such that
\begin{equation*}
8\ga(k_0^2+\bk^2)
\le \Re\widehat{\De^{(n)}}_k
\le \sfrac{1}{2}\tilde\ga(k_0^2+\bk^2)
\implies 2\ga{\wf'}^2_{\!\!2}
\le \Re \big<\psi^*\,,\,\De^{(n)}\psi\big>_0
 \le \half\tilde\ga{\wf'}^2_{\!\!2} 
\end{equation*}

It now suffices to combine \eqref{eqnLFa}--\eqref{eqnLFc} and use that,
by \cite[(\eqnPARestrad.a,b) and Corollary \corPARmunvn.a]{PAR1},
\begin{equation*}
\fv_n\ka(n)^2 <\fv_0^{\sfrac{3}{2}\eps}
\qquad
|\mu_n|< 4\fv_0^{5\eps}
\qquad
\half\le a_n\le 2
\end{equation*}
\end{proof}

We choose the ``small field'' cutoff function $\chi_n(\psi)$ of \eqref{eqnALFjn}
to be the characteristic function of
\begin{equation}\label{eqnALFIntncDef}
\begin{split}
\Int(n,\cc) & = \big\{\ \psi\in\cH_0^{(n)}\ \big|\ 
                 |\psi(x)|<\cc\ka(n),\ |\partial_\nu\psi(x)|<\cc\ka'(n)
                 \cr &\hskip2in
                 \quad\text{for all } x\in\cX_0^{(n)},\ 0\le\nu\le 3\ \big\} \cr
\end{split}
\end{equation}
with an appropriate value of $\cc$. 

Using the procedure starting at \cite[(\eqnHTmultiBS)]{PAR1} and leading up to 
\cite[Definition \defHTapproximateblockspintr]{PAR1}, and then applying
\cite[Theorem \thmTHmaintheorem]{PAR1}, we would expect the answer to the integral 
\begin{equation*}
\int_{\Int(n,\cc)} \Big[\prod_{x\in\cX_0^{(n)}} 
                   \sfrac{d\psi(x)^*\wedge d\psi(x)}{2\pi i}\Big]\ 
      e^{- A_n(\psi^*,\psi, \phi_*, \phi,\,\mu_n,\cV_n) +\cR_n +\cE_n }
     \bigg|_{\phi_{(*)} = \phi_{(*)n}(\psi^*,\psi,\mu_n,\cV_n)}
\end{equation*}
to have the main contribution a normalization constant times 
\begin{equation*}
\int_{\Int(n+1,\cc)} \Big[\prod_{x\in\cX_0^{(n+1)}} 
                   \sfrac{d\psi(x)^*\wedge d\psi(x)}{2\pi i}\Big]\ 
e^{- A_{n+1}(\psi^*,\psi, \phi_*, \phi,\,\mu_n,\cV_{n+1}) }
     \bigg|_{\phi_{(*)} = \phi_{(*)n+1}(\psi^*,\psi,\mu_{n+1},\cV_{n+1})}
\end{equation*}
The logarithm of the normalization constant is bounded in magnitude by
a constant, which depends only on $L$ and $\Gam_\op$, times $|\cX_0^{(n)}|$.
For constant $\psi$ close to the bottom of the potential well, the
integrand has magnitude greater than one, by the upper bound of
Proposition \ref{propALFfirst}.
Observe that if $\psi\in\An(n)\setminus \Int(n,\cc)$
then there is some $x\in\cX_0^{(n)}$ and possibly some $0\le\nu\le 3$ 
such that either $|\psi(x)|\ge \cc\ka(n)$ or 
$|\partial_\nu\psi(x)|\ge\cc\ka'(n)$.
So the lower bound of Proposition \ref{propALFfirst}, suggests the following
``corollary''. The significance of the quotation marks is that this is
a ``moral'' rather than a ``mathematical'' statement.

\begin{mcorollary}\label{corALFfirst}
Let $\cc>0$ and let $\fv_0$ be small enough, depending on $\cc$.
Then, for any $S\subset \An(n)\setminus \Int(n,\cc)$
\begin{equation*}
\int_S \Big[\prod_{x\in\cX_0^{(n)}} 
                   \sfrac{d\psi(x)^*\wedge d\psi(x)}{2\pi i}\Big]\ 
      \Big|e^{- A_n(\psi^*,\psi, \phi_*, \phi,\,\mu_n,\cV_n) +\cR_n +\cE_n }
      \Big|
\end{equation*}
is nonperturbatively small.
\end{mcorollary}

We shall later choose a small, possibly $L$--dependent constant,
$\cc_0>0$. Then our cutoff functions $\chi_n(\psi)$ are chosen to be
$\Int(n)=\Int(n,\cc_0)$.
 With these cutoff functions, we now sketch the argument that 
$
\sfrac{1}{\tilde\cZ_{n+1}}J_{n+1}-\sfrac{1}{\tilde\cZ_n}J_n
$ 
is nonperturbatively small in the case that $n\ge 1$.
It goes in three steps. First we just state what the steps are.
We'll discuss them in more detail shortly.

\noindent{\it Step 1:}\ \ \ 
Substituting
\begin{equation*}
1=\sfrac{1}{N^{(n)}_\bbbt}\hskip-2pt\int\hskip-2pt\Big[\hskip-5pt
 \prod_{y\in\cX_{-1}^{(n+1)}}\hskip-3pt
  \sfrac{d\th(y)^*\wedge d\th(y)}{2\pi i}\Big]
e^{-a L^{-2}\|\th-Q\psi\|^2_{-1}}
\end{equation*}
from \cite[Remark \remHTpropblockspin.a]{PAR1}, into \eqref{eqnALFjn} 
and \eqref{eqnALFjnint}, we have
\begin{equation}\label{eqnSFCIn}
\begin{split}
J_n &= \sfrac{1}{N^{(n)}_\bbbt \cZ_n}\hskip-2pt\int\hskip-2pt\Big[\hskip-5pt
 \prod_{y\in\cX_{-1}^{(n+1)}}\hskip-3pt
  \sfrac{d\th(y)^*\wedge d\th(y)}{2\pi i}\Big]
\int_{\Int(n)} \Big[\prod_{x\in\cX_0^{(n)}} 
                   \sfrac{d\psi(x)^*\wedge d\psi(x)}{2\pi i}\Big]\ 
     \\
& \hskip1.5in e^{-a L^{-2}\|\th-Q\psi\|^2_{-1}
    - A_n(\psi^*,\psi, \phi_*, \phi,\,\mu_n,\cV_n)
    +\cR_n+\cE_n}
\end{split}
\end{equation}
with
$
\phi_{(*)} = \phi_{(*)n}(\psi^*,\psi,\mu_n,\cV_n)
$.
The domain of integration for the double integral in \eqref{eqnSFCIn} is 
$(\th,\psi)\in \cH_{-1}^{(n+1)}\times\Int(n)$. 
The first step consists in  restricting the domain to
$(\th,\psi)\in \check\Int(n)\times\Int(n)$ where
\begin{align*}
&\check\Int(n)  = \set{\bbbs^{-1}\psi\in\cH_{-1}^{(n+1)}}{
                 \psi\in\Int(n+1)} \\
&\hskip0.1in = \set{\th\in\cH_{-1}^{(n+1)}}{
                 \!|\th(y)|<\cc_0\sfrac{\ka(n+1)}{L^{\sfrac{3}{2}}},\ |\partial_\nu\th(y)|<\cc_0\sfrac{\ka'(n+1)}{L^{\sfrac{3}{2}}L_\nu}
                 \ \forall\,y\in\cX_{-1}^{(n+1)},\ 0\le\nu\le 3} 
\end{align*}
and showing that the difference between $N^{(n)}_\bbbt \cZ_n J_n$ and 
\begin{equation}\label{eqnSFCJnB}
\begin{split}
&  \int_{\check\Int(n)}\hskip-2pt\Big[\hskip-5pt
 \prod_{y\in\cX_{-1}^{(n+1)}}\hskip-3pt
  \sfrac{d\th(y)^*\wedge d\th(y)}{2\pi i}\Big]
\int_{\Int(n)} \Big[\prod_{x\in\cX_0^{(n)}} 
                   \sfrac{d\psi(x)^*\wedge d\psi(x)}{2\pi i}\Big]\ 
     \\
& \hskip2in 
e^{-a L^{-2}\|\th-Q\psi\|^2_{-1}
    - A_n(\psi^*,\psi, \phi_*, \phi,\mu_n,\cV_n)
    +\cR_n+\cE_n}
\end{split}
\end{equation}
is nonperturbatively small.
\smallskip

\centerline{\includegraphics{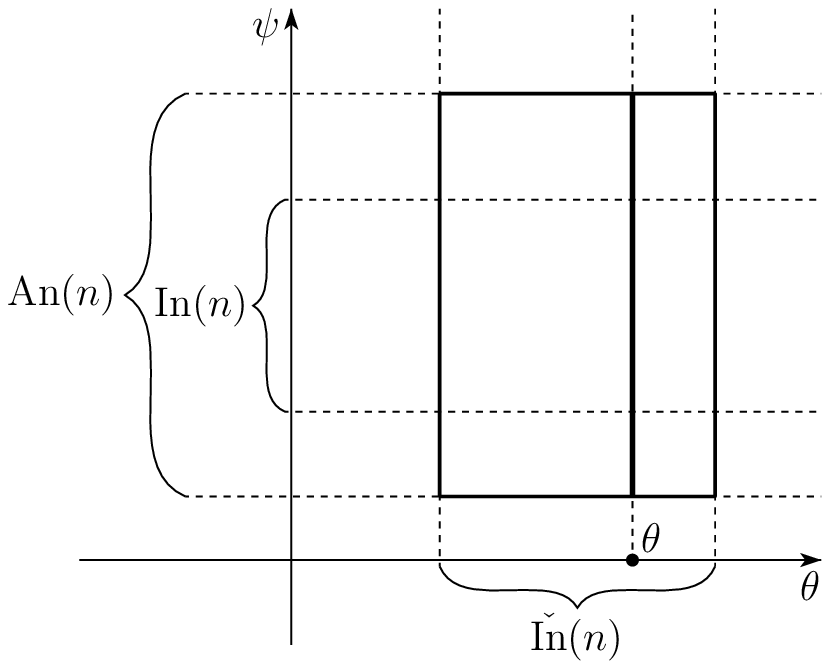}}

\smallskip
\noindent{\it Step 2:}\ \ \ This step consists in  enlarging the
integration domain $\check\Int(n)\times\Int(n)$ 
of \eqref{eqnSFCJnB} to $\check\Int(n)\times\An(n)$ and showing that the difference
between \eqref{eqnSFCJnB} and 
\begin{equation}\label{eqnSFCJnC}
\begin{split}
& \int_{\check\Int(n)}\hskip-2pt\Big[\hskip-5pt
 \prod_{y\in\cX_{-1}^{(n+1)}}\hskip-3pt
  \sfrac{d\th(y)^*\wedge d\th(y)}{2\pi i}\Big]
\int_{\An(n)} \Big[\prod_{x\in\cX_0^{(n)}} 
                   \sfrac{d\psi(x)^*\wedge d\psi(x)}{2\pi i}\Big]\ 
     \\
& \hskip2in 
e^{-a L^{-2}\|\th-Q\psi\|^2_{-1}
    - A_n(\psi^*,\psi, \phi_*, \phi, \mu_n, \cV_n)
    +\cR_n+\cE_n}
\end{split}
\end{equation}
is nonperturbatively small.

\smallskip
\noindent{\it Step 3:}\ \ \ The third step consists in showing that,
for each fixed $\th \in \check\Int(n)$ the inner integral 
\begin{equation}\label{eqnLFinnerInt}
\begin{split}
\int_{\An(n)} \Big[\prod_{x\in\cX_0^{(n)}} 
                   \sfrac{d\psi(x)^*\wedge d\psi(x)}{2\pi i}\Big]\ 
e^{-a L^{-2}\|\th-Q\psi\|^2_{-1}
    - A_n(\psi^*,\psi, \phi_*, \phi, \mu_n, \cV_n)
    +\cR_n+\cE_n}
\end{split}
\end{equation}
of \eqref{eqnSFCJnC} is nonperturbatively close to 
$\det C^{(n)}\ e^{\check\cC_n(\th_*,\th)}\ \check\cF_n(\th_*,\th)$
with the $\check\cC_n(\th_*,\th)$ and $\check\cF_n(\th_*,\th)$
of \cite[Proposition \propSTmainProp.a]{PAR1}. (They are defined at the beginning of
of \cite[\S\chapSTstrategy]{PAR1}.) 

Putting these three steps together, we see that $\sfrac{1}{\tilde\cZ_n}J_n$
is nonperturbatively close to
\begin{align*}
&\sfrac{\det C^{(n)}}{N^{(n)}_\bbbt \cZ_n \tilde\cZ_n }
 \int_{\check\Int(n)}\hskip-2pt\Big[\hskip-5pt
 \prod_{y\in\cX_{-1}^{(n+1)}}\hskip-3pt
  \sfrac{d\th(y)^*\wedge d\th(y)}{2\pi i}\Big]\
  e^{\check\cC_n(\th_*,\th)}\ \check\cF_n(\th_*,\th) \\
&= \sfrac{1}{\tilde\cZ_n }
 \int_{\check\Int(n)}\hskip-2pt\Big[\hskip-5pt
 \prod_{y\in\cX_{-1}^{(n+1)}}\hskip-3pt
  \sfrac{d\th(y)^*\wedge d\th(y)}{2\pi i}\Big]\ \Big( 
  \bbbt_n^{(SF)}\circ(\bbbs \bbbt_{n-1}^{(SF)}) \circ
   \cdots \circ (\bbbs \bbbt_0^{(SF)}) \Big)
\Big(e^{\cA_0} \Big)(\th^*,\th) \\
&=\sfrac{1}{\tilde\cZ_n  L^{3|\cX_0^{(n+1)}|}}
   \int_{\Int(n+1)} \Big[\prod_{x\in\cX_0^{(n+1)}} 
                   \sfrac{d\psi(x)^*\wedge d\psi(x)}{2\pi i}\Big]\ 
    \Big( (\bbbs \bbbt_n^{(SF)}) 
    \circ\cdots \circ (\bbbs \bbbt_0^{(SF)}) \Big)\Big(e^{\cA_0} \Big)
    (\psi^*,\psi) \\
&=\sfrac{1}{\tilde\cZ_{n+1}}J_{n+1}
\end{align*}
by  \eqref{eqnALFjnint}, \cite[Definition \defHTapproximateblockspintr,
Proposition \propSTmainProp.a and (\eqnHTmultiBS)]{PAR1}.

\medskip
We now elaborate on these three steps.

\noindent{\it Step 1:}\ \ \ 
Fix any 
$\th\notin \check\Int(n)$ and decompose the domain of integration 
for the $\psi$ integral
\begin{align*}
\Int(n)= \Int_s(n,\th)\cup\Int_b(n,\th)
\end{align*}
with
\begin{align*}
\Int_s(n,\th)
   &=\set{\psi\in\Int(n)}{L^{-1}\|\th-Q\psi\|_{-1}<\sfrac{1}{\rv_n^\eps}} \\
\Int_b(n,\th)
   &=\set{\psi\in\Int(n)}{L^{-1}\|\th-Q\psi\|_{-1}\ge \sfrac{1}{\rv_n^\eps}} 
\end{align*}
We would expect that the integral over $\psi\in \Int_b(n,\th)$ gives a 
nonperturbatively small contribution
because of the $-a L^{-2}\|\th-Q\psi\|^2_{-1}$ in the exponent.
Furthermore we claim that $\Int_s(n,\th)\subset \An(n)\setminus \Int(n,\cc)$
with 
\begin{equation*}
\cc=\min_{0\le\nu\le 3}
           \Big\{ \sfrac{\cc_0}{2 L^{\frac{3}{2}-\eta}\|Q\|_{m=0}}\,,\,
            \sfrac{\cc_0}{2 L^{\frac{3}{2}-\eta'}L_\nu\|Q^{(-)}_{+,\nu}\|_{m=0}}
   \Big\}
\end{equation*}
(The operator $Q_{n,\nu}^{(-)}$ was defined in \cite[(\eqnPBSqnplusminus)]{POA},
$L_0=L^2$ and $L_\nu=L$ for $\nu=1,2,3$.)
This will ``imply'', by ``Corollary \ref{corALFfirst}'', that the integral 
over $\Int_s(n,\th)$ is also nonperturbatively small.
So let $\psi\in\Int_s(n,\th)$.
\begin{itemize}[leftmargin=*, topsep=2pt, itemsep=2pt, parsep=0pt]
\item
If there is a $y\in\cX_{-1}^{(n+1)}$ with
$|\th(y)|\ge\cc_0\sfrac{\ka(n+1)}{L^{\frac{3}{2}}}$, then since
\begin{align*}
|\th(y)| & \le |\th(y)-(Q\psi)(y)| + |(Q\psi)(y)| \cr
  &\le \sfrac{1}{L^{5/2}} \|\th-Q\psi\|_{-1} + \|Q\|_{m=0}\|\psi\|_{\ell^\infty}\\
  & \le  \sfrac{1}{L^{3/2}\rv_n^\eps}  + \|Q\|_{m=0}\|\psi\|_{\ell^\infty}
\end{align*}
we have
\begin{equation*}
\|\psi\|_{\ell^\infty} 
   \ge \sfrac{1}{L^{3/2}\|Q\|_{m=0}}
               \big(\cc_0\ka(n+1)  - \sfrac{1}{\rv_n^\eps}\big)
   \ge \sfrac{\cc_0}{2 L^{3/2}\|Q\|_{m=0}} \ka(n+1)
   \ge \cc \ka(n)
\end{equation*}
and $\psi\notin \Int(n,\cc)$.

\item
If there is a $y\in\cX_{-1}^{(n+1)}$ and a $0\le\nu\le 3$
with $|\partial_\nu\th(y)|\ge\cc_0\sfrac{\ka'(n+1)}{L^{3/2}L_\nu}$, 
then since
\begin{align*}
|\partial_\nu\th(y)| & \le |\partial_\nu\th(y)-(\partial_\nu Q\psi)(y)| 
       + |(\partial_\nu Q\psi)(y)| \\
  & = \big|\partial_\nu\big[\th- Q\psi\big](y)\big| 
       + |( Q^{(-)}_{+,\nu}\partial_\nu \psi)(y)| \\
  &\le \sfrac{2}{L^{5/2}L_\nu} \|\th-Q\psi\|_{-1} 
        + \|Q^{(-)}_{+,\nu}\|_{m=0}\|\partial_\nu\psi\|_{\ell^\infty}\\
  & \le  \sfrac{2}{L^{3/2}L_\nu\rv_n^\eps}  
       + \|Q^{(-)}_{+,\nu}\|_{m=0}\|\partial_\nu\psi\|_{\ell^\infty}
\end{align*}
we have
\begin{align*}
\|\partial_\nu\psi\|_{\ell^\infty} 
   &\ge \sfrac{1}{L^{3/2}L_\nu\|Q^{(-)}_{+,\nu}\|_{m=0}}
               \big(\cc_0\ka'(n+1)  - \sfrac{2}{\rv_n^\eps}\big)
  \ge \sfrac{\cc_0}{2 L^{3/2}L_\nu\|Q^{(-)}_{+,\nu}\|_{m=0}} \ka'(n+1)
  \ge \cc \ka'(n)
\end{align*}
and, again, $\psi\notin \Int(n,\cc)$.
\end{itemize}

\vskip0.1in
\noindent{\it Step 2}\ \ \ ``follows'' directly from ``Corollary \ref{corALFfirst}''\ 
with $S=\An(n)\setminus \Int(n)$.

\vskip0.1in
\noindent{\it Step 3:}\ \ \ 
Fix any $\th\in\check\Int(n)$. Set
$$
\rho_n(\th)
=\psi_{*n}(\th^*,\th)^*-\psi_n(\th^*,\th)
$$
By Remark \cite[\remBGEcrfandcomplexconj]{BGE},
\begin{equation*}
\|\rho_n(\th)\|_\infty \le\sqrt{\cc_0}\,\ka'(n)
\end{equation*}
if $\cc_0$ is small enough.
Clearly $(\th^*,\th)$ is in the
domain of $\psi_{(*)n}$. Recall that
\begin{equation*}
\psi_{(*)n}(\th_*,\th)
=\sfrac{1}{L^{3/2}}\bbbl_*\big[\hat \psi_{(*)n}(\bbbs\th_*,\bbbs\th)\big]
\end{equation*}
Since  $\|\th\|_\infty<\sfrac{\cc_0}{L^{3/2}}\ka(n+1)$ and $\|\partial_\nu\th\|_\infty<\sfrac{\cc_0}{L^{3/2}L_\nu}\ka'(n+1)$
for all $0\le\nu\le 3$, we have, by \cite[Proposition \propCFpsisoln]{BGE}, 
with $\wf=\cc_0\bar\ka$, $\wf'=\cc_0\bar\ka'$,
\begin{alignat*}{3}
\|\psi_{(*)n}\|_\infty
&=\sfrac{1}{L^{3/2}}\|\hat\psi_{(*)n}\|_\infty&
&\le\sfrac{\cc_0}{L^{3/2}} \GGa_\op \ka(n+1)&
&=\sfrac{\cc_0}{L^{3/2-\eta}} \GGa_\op \ka(n) \\
\|\partial_\nu\psi_{(*)n}\|_\infty
&=\sfrac{1}{L^{3/2}L_\nu}\|\partial_\nu\hat\psi_{(*)n}\|_\infty&
&\le\sfrac{\cc_0}{L^{3/2}L_\nu} \GGa_\op \ka'(n+1)&
&=\sfrac{\cc_0}{L^{3/2-\eta'}L_\nu} \GGa_\op \ka'(n) 
\end{alignat*}
with $\eta<\sfrac{7}{8}$ and $\eta'<\sfrac{3}{4}$. 
Hence, if we pick $L$ large enough or $\cc_0$ small enough, 
depending only on $\GGa_\op$,
\begin{equation*}
\psi_{(*)n}(\th^*,\th) + \de\psi_{(*)} \in\An(n)\qquad
\end{equation*}
for all $\de\psi$ obeying $\|\de\psi_{(*)}\|<\sfrac{1}{2}\ka(n)$, 
               $\|\partial_\nu \de\psi_{(*)}\|<\sfrac{1}{2}\ka'(n)$.
We may rewrite the integral \eqref{eqnLFinnerInt} as
\begin{align*}
&\int_{\An(n)} \Big[\prod_{x\in\cX_0^{(n)}} 
                   \sfrac{d\psi(x)^*\wedge d\psi(x)}{2\pi i}\Big]\ 
     e^{-a L^{-2}\|\th-Q\psi\|^2_{-1}
    - A_n(\psi^*,\psi, \phi_*, \phi, \mu_n, \cV_n)
    +\cR_n+\cE_n} 
=\int_{I_n(\th^*,\th)} \tilde\om_n
\end{align*}
where $\,\tilde\om_n\,$ is the holomorphic differential form obtained from
the integrand on the left hand side through the substitution
\begin{equation}\label{eqnSFCdepsisub}
\psi^*=\psi_{*n}(\th^*,\th)+\de\psi_*\qquad
\psi=\psi_n(\th^*,\th)+\de\psi\qquad
\end{equation}
and the domain
\begin{align*}
I_n(\th^*,\th)
=\Big\{\,(\de\psi_*,\de\psi) \in \cH_0^{(n)} \times \cH_0^{(n)} \,\Big|\, 
  \ \ &\de\psi=\de\psi_*^* + \rho_n(\th),\cr
  &\psi_{n}(\th^*,\th)+\de\psi\in\An(n)\,\Big\}
\end{align*}
As in the proof of \cite[Proposition \propSTmainProp]{PAR1},
\begin{align*}
\tilde\om_n & = e^{\check\cC_n(\th_*,\th)} 
               e^{-<\de\psi_*,{C^{(n)}}^{-1}\,\de\psi>_0}
               e^{-\de\check A_n(\th_*,\th,\de\psi_*,\de\psi)
                 +\de\check\cR_n+\de\check\cE_n}
               \prod_{x\in\cX_0^{(n)}} 
                   \sfrac{d\de\psi_*(x)\wedge d\de\psi(x)}{2\pi i}
\end{align*}
We next make the change of variables 
$\de\psi_*= D^{(n)*}\ze_*$, 
       $\de\psi= D^{(n)}\ze$, 
with $D^{(n)}$ being an operator square root of $C^{(n)}$ and
$D^{(n)*}$ being the transpose (not adjoint) of $D^{(n)}$.
Then
\begin{align*}
\int_{I_n(\th^*,\th)} \tilde\om_n
=\int_{I'_n(\th)} \om_n
\end{align*}
with
\begin{align*}
I'_n(\th)
=\Big\{\,(\ze_*,\ze) \in \cH_0^{(n)} \times \cH_0^{(n)} \,\Big|\, 
  \ \ &D^{(n)}\ze=\overline{D^{(n)*}}\ze_*^* + \rho_n(\th),\cr
  &\psi_n(\th^*,\th)+D^{(n)}\ze\in\An(n)\,\Big\}
\end{align*}
and
\begin{align*}
\om_n & =   \det C^{(n)}\   e^{\check\cC_n(\th_*,\th)} 
               e^{-<\ze_*,\ze>_0}
               e^{-\de\check A_n
                 +\de\check\cR_n+\de\check\cE_n}
               \prod_{x\in\cX_0^{(n)}} 
                   \sfrac{d\ze_*(x)\wedge d\ze(x)}{2\pi i}
\end{align*}
When $D^{(n)}\ze=\overline{D^{(n)*}}\ze_*^* + \rho_n$
we have 
$
\ze_*= {D^{(n)*}}^{-1}\overline{D^{(n)}}\ze^*-{{D^{(n)*}}}^{-1}\rho_n^*
$
so that
\begin{equation}\label{eqnSFCbilinearform}
\<\ze_*,\ze\>
= \big<{D^{(n)*}}^{-1}\overline{D^{(n)}}\ze^*,\ze\big>
          - \big<{{D^{(n)*}}}^{-1}\rho_n^*,\ze\big>
\end{equation}

To convert the integral $\int_{I'_n(\th)} \om_n$ into an integral of
$\om_n$ over the ``real'' disk
$$
\Bot=\set{(\ze_*,\ze)}{\ze_* =\ze^*,\ \|\ze\|<r_n }
$$
we now choose a ``Stokes' Cylinder'' $\cY$ that contains $\Bot$ in its
boundary.
For each $0\le t\le 1$, set $C(t)=\big[t{C^{(n)}}^{-1}+(1-t)\bbbone\big]^{-1}$.
Note that $C(t)^{-1}=\sfrac{at}{L^2}Q^*Q+t\De^{(n)}+(1-t)\bbbone$ 
has strictly positive real part. (See \cite[(\eqnPOCftCinverse), 
Lemma \lemPOCDenppties.b,d and Lemma \lemPBSuplusppties.c]{POA}.)
Denote by $D(t)$ the square root of $C(t)$ given by the contour integral
as in \cite[Corollary \corPOCsquareroot]{POA}. Set
\begin{align*}
I'_t&=\Big\{\,(\ze_*,\ze) \in \cH_0^{(n)} \times \cH_0^{(n)} \,\Big|\, 
   D(t)\ze= D(t)^\dagger\ze_*^* + t\rho_n(\th)\ \Big\}
\\
\cY&=\Big\{\,(\ze_*,\ze) \in \bigcup_{0\le t\le 1} I'_t\,\Big|\, 
    \ \|\ze\|\le \root{4}\of{\cc_0} \ka'(n)\Big\}
\end{align*}
If $(\ze_*,\ze)\in\cY$, then both $\psi_n(\th^*,\th)+D(t)\ze$ 
and $\psi_{*n}(\th^*,\th)+D(t)^*\ze_*$ are in $\An(n)$, by
\cite[Remark \remBGEcrfandcomplexconj]{BGE}, provided we choose $c_0$ small enough. 
This is illustrated, for $t=1$, in the figure

\centerline{\includegraphics{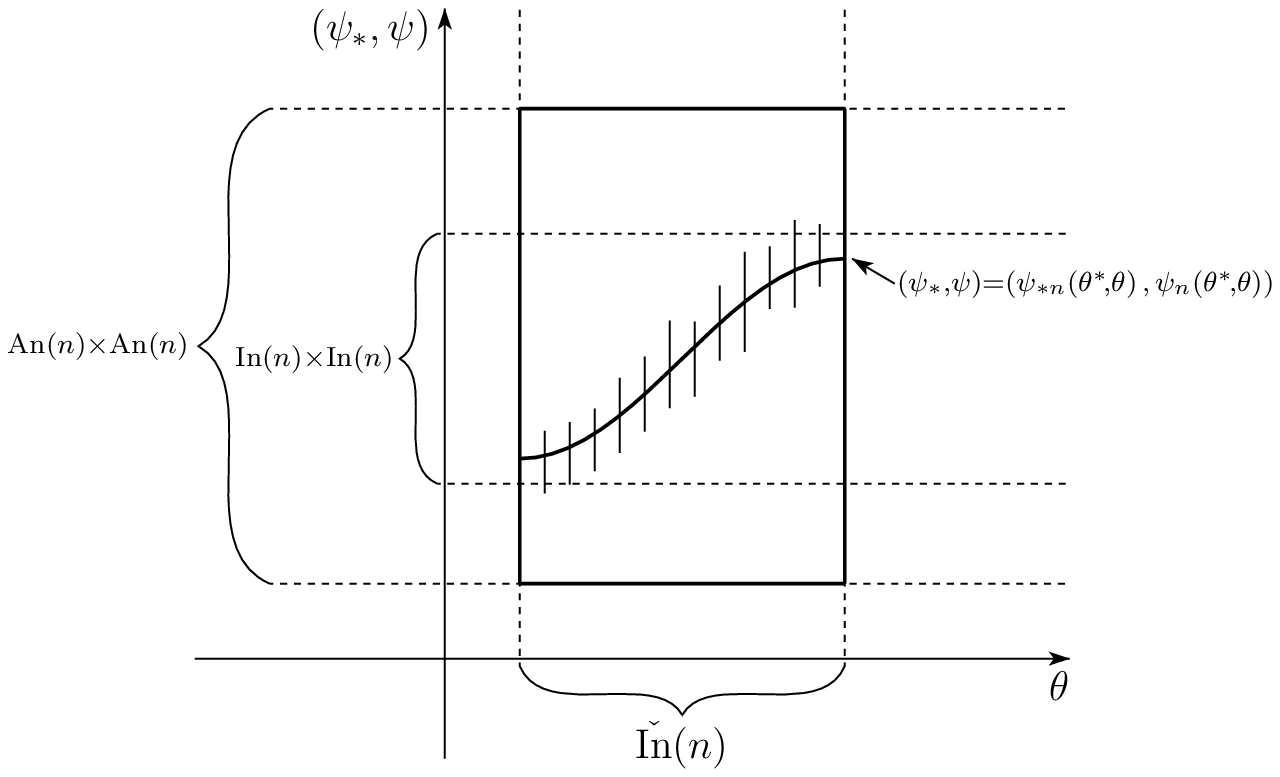}}
\noindent {$\sst\rm Each\ small\ vertical\ line\ in\ this\ figure\ is$}
\vskip-0.3in
\begin{align*}
\sst\set{\big(\psi_{*n}(\th^*,\th)+D^{(n)}\ze\,,\,
             \psi_{*n}(\th^*,\th)+{D^{(n)}}^*\ze\big)}
        {D^{(n)}\ze=\overline{D^{(n)*}}\ze_*^* + \rho_n(\th) \ 
         \|\ze\|_\infty\le \root{4}\of{\cc_0}\, \ka'(n)}
\end{align*}
\pagebreak[1]

\bigskip
We now show that if $(\ze_*,\ze)\in I'_t$, for some $0\le t\le 1$, 
and $\|\ze\|\ge \root{4}\of{\cc_0}\, \ka'(n)$, then 
\begin{equation}\label{eqnSFCbilinearformt}
\Re \<\ze_*,\ze\> \ge\const \|\ze\|^2\ge\const\sqrt{\cc_0}\,\ka'(n)^2
\gg r_n^2
\end{equation}
Since $(\ze_*,\ze)\in I'_t$
\begin{align*}
\<\ze_*,\ze\>
&= \big<{D(t)^*}^{-1}\overline{D(t)}\ze^*,\ze\big>
          - t\big<{D(t)^*}^{-1}\rho_n^*,\ze\big>
\end{align*}
The real part of $\big<{D(t)^*}^{-1}\overline{D(t)}\ze^*,\ze\big>$ is
\begin{align}
&\half\Big\{\big<{D(t)^*}^{-1}\overline{D(t)}\ze^*,\ze\big>
          +\big<\overline{{D(t)^*}^{-1}} D(t)\ze,\ze^*\big>\Big\}\nonumber\\
&\hskip0.5in=\half\big<\overline{D(t)}\ze^*,
   {D(t)^\dagger}^{-1}\big[D(t)^\dagger D(t)^{-1} +  
      {D(t)^\dagger}^{-1}D(t)\big]D(t)^{-1}D(t)\ze\big>\nonumber\\
&\hskip0.5in=\half\big<\eta^*,
     \big[D(t)^{-2}+(D(t)^{-2})^\dagger\big]\eta\big>
 \qquad\text{with $\eta=D(t)\ze$}\nonumber\\
&\hskip0.5in=\big<\eta^*,
     \big[\sfrac{at}{L^2}Q^*Q+t\Re \De^{(n)}+(1-t)\bbbone\big]
    \,\eta\big>\cr
&\hskip0.5in\ge \const\|\eta\|^2\nonumber\\
&\hskip0.5in\ge \const\|\ze\|^2  \label{eqnSFCrealPartQfT}
\end{align}
since $D(t)^{-2}=C(t)^{-1}=\sfrac{at}{L^2}Q^*Q+t\De^{(n)}+(1-t)\bbbone$
and since $D(t)^{-1}$ is a bounded operator.

Since  $\int_{\partial\cY} \om_n=0$ by Stokes' theorem,
\begin{align*}
\int_{I'_n(\th)} \om_n
&=\int_{I'_n(\th)\setminus\partial\cY} \om_n
  +\int_{I'_n(\th)\cap\partial\cY} \om_n
  -\int_{\partial\cY} \om_n
\cr
&=\int_{I'_n(\th)\setminus\partial\cY} \om_n
  -\int_{\partial\cY\setminus(I'_n(\th)\cap\partial\cY)} \om_n
\cr
&=\int_{\Bot} \om_n
    + \int_{I'_n(\th)\setminus\partial\cY} \om_n
    - \int_{\partial\cY\setminus
             [(I'_n(\th)\cap\partial\cY)\cup\Bot]} \om_n
\end{align*}

\centerline{\includegraphics{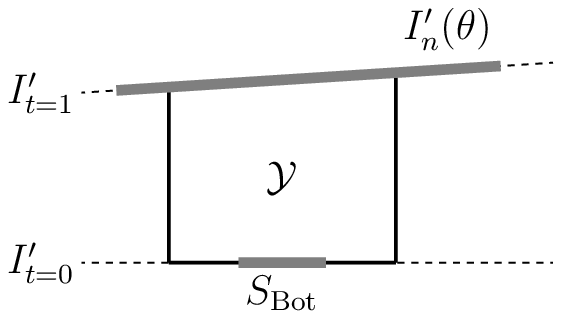}}

\noindent
By \eqref{eqnSFCbilinearformt} and the definition of $\Bot$,
the last two integrals are both nonperturbatively small.
By the definition of $\check\cF_n(\th_*,\th)$, the first integral 
\begin{equation*}
\int_{\Bot} \om_n=  \det C^{(n)}\ e^{\check\cC_n(\th_*,\th)}\ \check\cF_n(\th_*,\th)
\end{equation*}

\newpage
\bibliographystyle{plain}
\bibliography{refs}

\end{document}